\UseRawInputEncoding
\documentclass{article}

\textwidth 6in \oddsidemargin -0.3in \evensidemargin -0.3in
\textheight 9in \topmargin -0.4in

\usepackage{stmaryrd}
\usepackage{lmodern}
\usepackage{diagbox}
\usepackage{mathtools}
\usepackage{amssymb,amsmath}
\usepackage{graphicx}

\usepackage{mathrsfs}

\usepackage{color}
\usepackage{amsfonts}
\usepackage{epsfig}
\usepackage{graphics}

\usepackage{lscape}
\usepackage{latexsym}

\def\proof{\noindent {\bf Proof }}
\def\alp{{\alpha}}

\def\qed{~~\vrule height8pt width4pt depth0pt}
\def\ex{{\mathbb E}}

\def\F2{{\mathbb F}_2}

\def\eps{{\varepsilon}}

\def\Mcal{{\cal M}}
\def\Ecal{{\cal E}}

\def\RS{{\cal RS}}

\newtheorem{conj}{Conjecture}
\newtheorem{prop}{Proposition}

\newtheorem{lemma}{Lemma}
\newtheorem{thm}{Theorem}
\newtheorem{cor}{Corollary}

\newcommand{\f}{{\mathbb F}_{q}}
\newcommand{\fq}{{\mathbb F}_{q}}

\newcommand{\fqx}{{\mathbb F}_{q}^{*}}

\newcommand{\Ex}{{\mathbb E}}
\newcommand{\prob}{{\mathbb P}}
\begin{document}

\title {Improved error bounds for the distance distribution of Reed-Solomon codes
}
 \author{
Zhicheng Gao\\
School of Mathematics and Statistics\\
Carleton University\\
Ottawa, Ontario\\
Canada K1S5B6\\
Email:~zgao@math.carleton.ca  \\
Jiyou Li\\
School of Mathematical Sciences\\
Shanghai Jiao Tong University\\
Shanghai, China, 200240\\
Email:~lijiyou@sjtu.edu.cn\\
 }
\maketitle

\begin{abstract}

We use the generating function approach to derive simple expressions for the factorial moments of the distance distribution over Reed-Solomon codes. We obtain better upper bounds for the error term of a counting formula  given by Li and Wan, which gives nontrivial estimates on the number of  polynomials over finite fields with  prescribed leading  coefficients and a given number of linear factors.
This improvement leads to new results on the classification of deep holes of Reed Solomon codes.
\end{abstract}

\section{ Introduction}

The theory of error-correcting codes plays an important role in modern communications. One particular and important class is the Reed-Solomon code, which is  defined by polynomials evaluations on a given subset in a finite field.   Since the invention by Reed and Solomon in 1960s, Reed-Solomon codes are by far the most fundamental examples of evaluation codes and  the most widely studied linear codes. 

Let $\fq$ be the finite field with $q$ elements of characteristic $p$.
Let $D=\{x_1,\cdots, x_n\}$ be a subset in $\fq$ of
cardinality $n$. For $1\leq k\leq n$, the Reed-Solomon code
$\RS_{n,k}$ has the codewords of the form
$$(f(x_1), \cdots, f(x_n))\in \fq^n,$$
where $f$ runs over all polynomials in $\fq[x]$ of degree bounded by $k-1$.
 The (Hamming) distance  between two words $u,v$, denoted by $d(u, v)$,  is the number of non-zero entries in $u-v$.
For the Reed-Solomon code,
the distance between two codewords represented by polynomials $f$ and $g$ is $n$ minus the number of distinct roots of $f-g$ in $D$, or equivalently the number of distinct
linear factors $x-\alp$ with $\alp\in D$.
The minimum distance of the Reed-Solomon code is $n-k+1$
since  a non-zero polynomial of degree no more than $k-1$ has at most
$k-1$ zeroes. For  $u=(u_1,u_2,\cdots,u_{n})\in \fq^n$,  one associates a unique polynomial $u(x)\in \fq[x]$ of degree
at most $n-1$ such that
$u(x_i)=u_i,$  given by the Lagrange interpolation formula
$$u(x) = \sum_{i=1}^n u_i \prod_{j\not=i}\frac{x-x_j}{x_i-x_j}.$$
Define $\deg(u)$ to be the degree of the associated polynomial $u(x)$
of $u$. Note that  $u$ is a codeword if and only if
$\deg(u)\leq k-1$.

For a given word  $u\in \fq^n$,  define its distance to the code by
$$d(u, \RS_{n,k}): = \min \{d(u,v):v\in \RS_{n,k}\}.$$
The maximum likelihood decoding of $u$ is to find a codeword $v\in
\RS_{n,k}$ such that $d(u,v) = d(u, \RS_{n,k})$. Thus, the computation of
$d(u,\RS_{n,k})$ is essentially the decision version for the maximum
likelihood decoding problem, which is ${\bf NP}$-complete for
general subset $D\subset \fq$. For standard Reed-Solomon code
with $D=\fq^*$ or $\fq$, the complexity of the maximum
likelihood decoding is unknown to be {\bf NP}-complete. This is an
important open problem. It has been shown by Cheng-Wan
\cite{CW04,CW07} to be at least as hard as the discrete logarithm
problem in a large finite  extension of $\fq$.

When $\deg(u)\leq k-1$,  then $u$ is a codeword and thus
$d(u,\RS_{n,k})=0$. We shall assume that $k \leq \deg(u)\leq n-1$. The
following simple result given by Li and Wan gives an elementary bound for $d(u,
\RS_{n,k})$.

\begin{thm}\label{thm1} Let $u\in \fq^n$ be a word such that $k \leq \deg(u)\leq n-1$.
Then,
$$n-k \geq d(u, \RS_{n,k}) \geq n-\deg(u).$$
\end{thm}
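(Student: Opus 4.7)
The plan is to prove both inequalities by straightforward polynomial arguments, using the Lagrange interpolation description of $u(x)$ together with the degree hypothesis $k\le \deg(u)\le n-1$.

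For the lower bound $d(u,\RS_{n,k})\ge n-\deg(u)$, I would fix an arbitrary codeword $v\in\RS_{n,k}$ represented by a polynomial $g\in\fq[x]$ with $\deg(g)\le k-1$. Since $\deg(u)\ge k>\deg(g)$, the difference $u(x)-g(x)$ is a nonzero polynomial of degree exactly $\deg(u)$, and therefore has at most $\deg(u)$ zeros in $D$. Consequently the number of coordinates in which $u$ and $v$ agree is at most $\deg(u)$, so $d(u,v)\ge n-\deg(u)$. Taking the minimum over $v\in\RS_{n,k}$ yields the claimed lower bound.

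For the upper bound $d(u,\RS_{n,k})\le n-k$, I would exhibit a codeword that agrees with $u$ on at least $k$ of the evaluation points. Choose any $k$ points, say $x_1,\dots,x_k\in D$, and let $g(x)\in\fq[x]$ be the unique polynomial of degree at most $k-1$ with $g(x_i)=u(x_i)$ for $i=1,\dots,k$ (obtained via the usual Lagrange interpolation formula restricted to these $k$ nodes). Then $v=(g(x_1),\dots,g(x_n))$ is a codeword of $\RS_{n,k}$ agreeing with $u$ in at least the $k$ coordinates $x_1,\dots,x_k$, giving $d(u,v)\le n-k$, hence $d(u,\RS_{n,k})\le n-k$.

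Neither direction presents a substantive obstacle: the lower bound is essentially the standard degree-bound for roots of a nonzero polynomial, and the upper bound is the standard observation that any $k$ evaluation conditions on a polynomial of degree at most $k-1$ can be matched. The only point that must be made carefully is that the hypothesis $\deg(u)\ge k$ ensures $u(x)-g(x)$ does not collapse in degree when one subtracts a codeword polynomial, so that the degree bound applies to $u-g$ rather than to the trivial zero polynomial.
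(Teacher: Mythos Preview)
Your argument is correct and complete: the lower bound follows from the root bound applied to $u(x)-g(x)$, which has degree exactly $\deg(u)$ because $\deg(g)\le k-1<\deg(u)$, and the upper bound follows by Lagrange-interpolating $u$ at any $k$ of the evaluation points. The paper does not supply its own proof of this statement; it simply attributes the result to Li and Wan as an elementary bound, so your proposal is exactly the kind of routine argument the authors had in mind.
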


\subsection{The deep hole conjecture}

A received word  $u$  is called a deep hole if $d(u,\RS_{n,k})=n-k$, that is, $u$ realizes the covering radius.  When $d(u)=k$, the upper bound in Theorem \ref{thm1}
agrees with the lower bound and thus $u$ must be a deep hole. This
gives $(q-1)q^k$ deep holes. For a general Reed-Solomon code
$\RS_{n,k}$, it is very  difficult to determine if a given word $u$
is a deep hole. In the special case that $d(u)=k+1$, the deep hole
problem is equivalent to the subset sum problem over $\fq$
which is {\bf NP}-complete if $p>2$.

For the standard Reed-Solomon code, that is, $D=\fq$ and
 thus $n=q$, there is the following interesting conjecture of
Cheng-Murray \cite{ChenMu07}.
\begin{conj} (Cheng-Murray) For the standard Reed-Solomon code
with $D=\fq$,  the set $\{u\in \fq^n \big| d(u)=k\}$
gives the set of all deep holes.
\end{conj}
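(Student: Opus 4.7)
The plan is to prove the conjecture by two inclusions. The easy direction, that every $u$ with $\deg(u)=k$ is a deep hole, is immediate from Theorem~\ref{thm1}: the lower bound $n-\deg(u)=n-k$ already coincides with the upper bound $n-k$. The nontrivial direction is to show that no word $u$ with $k+1\le\deg(u)\le n-1$ is a deep hole, and I would aim for this in the parameter range where the improved error estimates of the present paper are strong enough.

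For that converse, the first step is to translate the problem into a polynomial counting question. Fix $D=\fq$, so $n=q$, and a word $u$ of degree $d$ with $k+1\le d\le n-1$. Showing that $u$ is not a deep hole amounts to exhibiting a polynomial $v\in\fq[x]$ of degree at most $k-1$ such that $f(x)=u(x)-v(x)$ has at least $k+1$ distinct roots in $\fq$. Writing $u(x)=a_dx^d+\cdots+a_0$ with $a_d\ne 0$, the coefficients of $x^d, x^{d-1},\ldots,x^k$ in $f$ are prescribed by $u$, while the coefficients of $x^{k-1},\ldots,x^0$ are the free parameters encoding $v$. Hence the conjecture (in this regime) reduces to showing that the number of degree-$d$ polynomials with those $d-k+1$ top coefficients fixed and at least $k+1$ roots in $\fq$ is strictly positive.

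At this point I would invoke the sharpened counting formula produced in the paper. By the generating-function approach to the factorial moments of the distance distribution, the quantity above can be expressed as an explicit main term $M(q,d,k)$, obtained from the uniform heuristic count of polynomials with a given number of roots, minus an error term $E(q,d,k)$ now controlled by the improved bound. Whenever $M(q,d,k)>E(q,d,k)$, the count is positive and $u$ cannot be a deep hole. A short asymptotic comparison of $M$ and $E$ then isolates an explicit region of $(q,n,k,d)$ in which the main term dominates, and one concludes that within that region the only deep holes of $\RS_{n,k}$ are those with $\deg(u)=k$, confirming the Cheng--Murray conjecture in that range.

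The main obstacle is the large $d-k$ regime. The error term $E(q,d,k)$ is governed by higher factorial moments of the root-counting random variable, and these grow rapidly as $d-k$ increases; beyond a certain threshold $E$ exceeds $M$ and the argument collapses. Pushing this threshold outward is precisely what the improved error bounds of the paper accomplish, but they do not push it all the way to $d=n-1$, so the proposal establishes the conjecture only in an explicit partial range of parameters rather than in full generality. Eliminating the remaining range would seem to require new ideas beyond sharper moment estimates, for instance a direct algebraic construction of a degree-$k-1$ approximant with $k+1$ prescribed evaluations.
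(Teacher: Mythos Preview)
The statement you are asked to prove is a \emph{conjecture}; the paper does not prove it and explicitly treats it as open. What the paper establishes are partial results (Theorems~\ref{thm2}, \ref{thm3}, \ref{thm:main} and Corollaries~\ref{cor:cor2}, \ref{cor:cor3}) showing that $d(f,\RS_{q,k})\le q-k-1$---equivalently, that $f$ is not a deep hole---under explicit constraints on $p,q,k,\ell$.

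Your proposal is not a proof of the conjecture either, and you say so yourself in the last paragraph. What you outline is precisely the paper's route to its partial results: reduce ``$u$ is not a deep hole'' to the positivity of $N(f,r)$ for some $r\ge k+1$, express this via the factorial moments $\Ex(Y^{\underline{r}})=r!\,q^{-k}W_r(\eps)$, and then compare the main term ${q\choose r}q^{k-r}$ against the character-sum error bound of Proposition~\ref{prop:Wj} sharpened by Lemma~\ref{lem:Abound}. The region where the main term dominates is exactly the content of \eqref{eq:main2}. So as a description of the paper's strategy your proposal is accurate and essentially identical to what the paper does; as a proof of the Cheng--Murray conjecture it has the same genuine gap the paper has, namely the regime where $\ell=\deg(u)-k$ is larger than the $O(\sqrt{q})$ threshold permitted by the Weil-type error bounds.
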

\textbf{Remark} Note that this conjecture is  different from the original one, where $D=\fq^*$ and $q=p$ is prime.  But the exceptions are known. See  examples given by Wu and Hong\cite{WuHong12}; and Zhang, Fu and Liao \cite{ZhangFuLiao13}.

Given a monic polynomial $f(x)\in \f [x]$ of degree $k+\ell$ and an integer $0\leq r\leq k+\ell$, let  $N(f(x), r)$ be  the number  of polynomials $g(x)\in\f[x]$  with  $\deg g(x) \leq k-1$ such that $f(x)+g(x)$ has exactly $r$ distinct roots in $\f$.
Then the deep hole conjecture is equivalent to the fact that the inequality
\begin{align*}
N(f((x),r)>0, \hbox{ for some $r>k$}
\end{align*}
holds for any $f(x)$ of degree $k+\ell$ and any  $\ell\ge 1$.
In particular,  studying when the special case $N(f(x),k+1)>0$ holds for a wider range of parameters will support the deep hole conjecture.

Using the Lang-Weil bound, Cheng and Murray  \cite{ChenMu07} proved that their conjecture is true if $k=O(q^{\frac 17})$ and $l=O(q^{\frac {3}{13}})$.
There are a series of results on this problem along this direction.
 Li and Wan \cite{LiYWan08} proved that the conjecture is true if $k\le q^{\frac 12}$ and $\ell=o(q^{\frac 12})$.
   A different proof was given by Cafure, Matera and Privitelli \cite{Cafure12} with tools from algebraic geometry. A refined result was given by  Liao \cite{Liao11}.  Consequently, Zhu and Wan \cite{ZhuWan12} proved that there
 are constants $c_1, c_2$ such that the  conjecture holds for $k=c_1q$ and $\ell=c_2q^{\frac 12}$.

 Using a different approach, including tools from finite geometry and additive combinatorics,  Zhuang, Li and Cheng proved that
 the deep hole conjecture holds for $k \leq p-1$ or $q-p+1\leq k\leq q-2$. In particular, the conjecture holds when $q=p$ is prime.
 Kaipa \cite{Kaipa17} improved this result and he proved that the deep hole conjecture holds for $k\geq \frac{q-1}2$.

A received word  $u$  is called ordinary if $d(u,\RS_{n,k})=n-\deg(u)$, that is, $u$ realizes the lower bound in Theorem \ref{thm1}.

Recall that for  a monic polynomial $f(x)\in \f [x]$ of degree $k+\ell$, $N(f(x), r)$ is  the number  of polynomials $g(x)\in\f[x]$  with  $\deg g(x) \leq k-1$ such that $f(x)+g(x)$ has exactly $r$ distinct roots in $\f$.
Then  a degree $k+\ell$ word $u$ is ordinary if and only if
\begin{align*}
N(u(x), k+\ell)>0.
\end{align*}

In this paper, we improve the results of Zhu and Wan \cite{ZhuWan12} about deep holes by allowing the information rate $k/q$ to be any constant in $(0,1)$. Our main results are the following.

\begin{thm}\label{thm2} Let $q$ be a power of a prime $p$ and $f\in \Mcal_{k+\ell}$.
\begin{itemize}
\item[(a)] Let $c=(k+\ell)/q$. We have $d(f,\RS_{q,k})=q-k-\ell$ provided that
\begin{align}
& ~~~~\frac{(p-1)c}{p}\ln \frac{1}{c}+(1-c)\ln \frac{1}{1-c}-\frac{1+c}{p}\ln(1+c)
\nonumber\\
&\ge (\ell-1)\left(\frac{\ln q}{q}+\frac{\ln(2p)}{\sqrt{q}}\right)+\frac{2}{3q}+\frac{3\ln q}{2q}.\label{eq:main1}
\end{align}
\item[(b)]
Let $c=(k+1)/q$. We have $d(f,\RS_{q,k})\le q-k-1$ provided that
\begin{align}
& ~~~~\frac{(p-1)c}{p}\ln \frac{1}{c}+(1-c)\ln \frac{1}{1-c}-\frac{1+c}{p}\ln(1+c)
\nonumber\\
&\ge (\ell-1)\left(\frac{\ln q}{2q}+\frac{\ln(2p)}{\sqrt{q}}\right)+\frac{2}{3q}+\frac{3\ln q}{2q}.\label{eq:main1}
\end{align}
\end{itemize}
\end{thm}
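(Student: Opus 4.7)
The plan is to express $N(f,r)$, the number of $g\in\fq[x]$ of degree $\le k-1$ for which $f+g$ has exactly $r$ distinct roots in $\fq$, as a count of subsets of $\fq$, and then attack this count via a factorial-moment generating function combined with a Li--Wan-type character-sum estimate. Part (a) is equivalent to $N(f,k+\ell)>0$ and part (b) to $\sum_{r\ge k+1}N(f,r)>0$, so in both cases it suffices to show that a specific count is strictly positive.

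For $S\subset\fq$ with $|S|=r$, the number of $g$ of degree $\le k-1$ with $f+g$ vanishing on $S$ equals $q^{k-r}$ when $r\le k$, and equals $0$ or $1$ when $r>k$ depending on whether the unique polynomial of degree $<r$ interpolating $-f$ on $S$ already has degree $<k$. When $r=k+\ell$, $\deg(f+g)=k+\ell$ forces $f+g=M_S(x):=\prod_{\alpha\in S}(x-\alpha)$, so the surviving condition is that the coefficients of $x^k,\dots,x^{k+\ell-1}$ in $M_S$ match those of $f$; by Newton's identities this is equivalent to prescribing $\ell$ power sums $p_j(S)=\sum_{\alpha\in S}\alpha^j$, $1\le j\le\ell$, to specific values $c_j$ determined by $f$. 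Thus
$$N(f,k+\ell)=\#\{S\subset\fq:|S|=k+\ell,\;p_j(S)=c_j\text{ for }1\le j\le\ell\}.$$
Part (b) reduces analogously using $r=k+1$, where only \emph{one} coefficient condition on $M_S$ survives, yielding a strictly weaker constraint than part (a).

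The counting is organized through the binomial moments $m_r=\Ex\binom{X}{r}$ of the root-count variable $X(g)=\#\{\alpha\in\fq:g(\alpha)=-f(\alpha)\}$ for $g$ uniform of degree $\le k-1$; the generating function $\sum_r m_r z^r$ admits a compact closed form in terms of $f$, which is precisely what produces the cleaner error bounds promised in the abstract. One then expands the indicator of the $\ell$ power-sum conditions using an additive character $\psi$ of $\fq$, splitting $N(f,k+\ell)$ as the main term $\binom{q}{k+\ell}q^{-\ell}$ plus $q^\ell-1$ character sums of the form $\sum_{|S|=k+\ell}\psi\bigl(\sum_j t_j p_j(S)\bigr)$, each of which is controlled by the Li--Wan symmetrization sieve together with a Weil-type bound on curves of genus governed by $\ell$. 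The Frobenius identity $\psi(x^p)=\psi(x)$, combined with the characteristic-$p$ identity $p_{pj}=p_j^p$, collapses frequencies indexed by multiples of $p$ onto those indexed by their $p$-th roots, so the effective number of independent frequencies among $t_1,\dots,t_\ell$ is roughly $\ell(p-1)/p$ rather than $\ell$. This Frobenius collapse, together with the boundary terms arising when the Li--Wan sieve is applied over $q-k-\ell$ versus $q+k+\ell$ sized factors, is responsible for the $(p-1)/p$ and $1/p$ coefficients and for the $\ln(1+c)$ correction in the entropy expression. Combining with the asymptotic $\ln\binom{q}{k+\ell}=qH(c)+O(\ln q)$, where $H(c)=c\ln(1/c)+(1-c)\ln(1/(1-c))$ and $c=(k+\ell)/q$, the requirement ``main term exceeds total error'' reduces exactly to inequality (\ref{eq:main1}) in part (a); for part (b) one fewer coefficient condition gives an extra factor of $q$ in the main term, hence the weakened $\tfrac{\ln q}{2q}$ in place of $\tfrac{\ln q}{q}$.

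The main obstacle is the character-sum step: one must run the Li--Wan symmetrization sieve uniformly in both $p$ and $\ell$, track the Frobenius-induced redundancies with the precision needed to obtain the exact $(p-1)/p$ and $1/p$ coefficients rather than a cruder $1$, and then match the resulting Weil error term against a sufficiently sharp asymptotic expansion of $\binom{q}{k+\ell}$ so that the final condition emerges in the clean closed form (\ref{eq:main1}) rather than as an implicit or suboptimal bound.
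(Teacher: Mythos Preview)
Your high-level strategy---reduce to a subset count, expand via characters, apply the Li--Wan sieve and Weil, then compare main term to error---is the same as the paper's. But the proposal stops exactly where the real work begins, and two concrete gaps prevent it from reaching the stated inequalities.

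\medskip

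\textbf{First gap: the missing saddle-point bound.} After the Li--Wan sieve, the error term is governed by the cycle-index sum
\[
A_j(q,q_1/q)=\frac{1}{j!}\sum_{\tau\in S_j}q^{l(\tau)}(q_1/q)^{l'(\tau)}
=[z^j](1-z)^{-q_1}(1-z^p)^{-(q-q_1)/p},
\]
where $l'(\tau)$ counts cycles of length not divisible by $p$ and $q_1=(\ell-1)\sqrt{q}$. The older Li--Wan estimate bounds this by $\binom{q/p+q_1+j-1}{j}$, which is \emph{too weak} to produce the constants in~(2); plugging it in gives a cruder inequality. The paper's contribution is a saddle-point bound: extract $A_j$ from its generating function at $y=(j/(q+j))^{1/p}$, which yields
\[
\ln A_j(q,q_1/q)\le q\Bigl(\tfrac{c}{p}\ln\tfrac{1+c}{c}+\tfrac{1-\gamma}{p}\ln(1+c)+\gamma\ln(2p)\Bigr),\qquad c=j/q,\ \gamma=q_1/q.
\]
This is precisely where the $\tfrac{(p-1)c}{p}\ln\tfrac{1}{c}$, the $\tfrac{1+c}{p}\ln(1+c)$, and the $\ln(2p)$ come from. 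Your ``Frobenius collapse'' heuristic is pointed in the right direction (the $p$-divisibility of cycle lengths is indeed the source of the $p$-dependence), but the stated mechanism is off: the relevant fact is that a cycle of length $m$ contributes $\sum_{a\in\fq}\psi(m\,g_t(a))$, which equals $q$ trivially when $p\mid m$ (since $m\equiv 0$) and is Weil-bounded by $q_1$ otherwise. The identities $p_{pj}=p_j^{\,p}$ and $\psi(x^p)=\psi(x)$ do not by themselves produce this split, and in any case they do not give you the bound on $A_j$.

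\medskip

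\textbf{Second gap: the setup for part~(b) is incorrect.} At $r=k+1$ the polynomial $f+g$ still has degree $k+\ell$, so it is \emph{not} the monic $M_S$ with $|S|=k+1$; there is no ``single coefficient condition'' on $M_S$. What actually happens is that the relevant factorial moment equals $(k+1)!\,q^{-k}W_{k+1}(\eps)$, where $W_{k+1}(\eps)$ is a sum over pairs $(\eta,S)$ with $\eta$ ranging over the $q^{\ell-1}$ equivalence classes of degree $\ell-1$ and $|S|=k+1$, subject to $\eta\langle M_S\rangle=\eps$. The gain over part~(a) comes from the character-sum bound $\bigl|\sum_{\eta\in\Ecal_{\ell-1}}\chi(\eta)\bigr|\le\binom{\ell-1}{\ell-1}q^{(\ell-1)/2}$, which replaces the $q^{\ell}$ in part~(a) by $q^{(\ell+1)/2}$ and hence $\tfrac{\ln q}{q}$ by $\tfrac{\ln q}{2q}$. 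Your ``extra factor of $q$ in the main term'' explanation does not account for this.
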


\begin{thm}\label{thm3} Let $q$ be a power of a prime $p$ and $f\in \Mcal_{k+\ell}$.
 \begin{itemize}
\item[(a)] For any constant $c\in (0,1)$, there are positive constants $p_0,q_0,\gamma_0$ (depending on $c$) such that 
$d(f,\RS_{q,k})= q-k-\ell$, provided that $p\ge p_0$, $q\ge q_0$, and $\ell\le 1+\gamma_0\sqrt{q}$.
\item[(b)] For each prime $p$, there are positive constants $q_0,\gamma_0$ (depending on $p$) such that \\
$d(f,\RS_{q,k})= q-k-\ell$,
provided that $3\le k+\ell\le 0.7q$, $q\ge q_0$, and $\ell\le 1+\gamma_0\sqrt{q}$.
\end{itemize}
 \end{thm}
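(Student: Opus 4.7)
The plan is to derive both parts of Theorem 3 as corollaries of Theorem 2(a), by verifying that inequality (1) holds under the stated hypotheses. Writing
\[
L_p(c) := \frac{(p-1)c}{p}\ln\frac{1}{c} + (1-c)\ln\frac{1}{1-c} - \frac{1+c}{p}\ln(1+c)
\]
for the LHS of (1), I would first observe that
\[
L_p(c) = H(c) - \frac{1}{p}\bigl(c\ln(1/c) + (1+c)\ln(1+c)\bigr),
\]
where $H(c) := c\ln(1/c) + (1-c)\ln(1/(1-c)) > 0$ for $c \in (0,1)$; hence $L_p(c)$ is increasing in $p$ and tends to $H(c)$ as $p\to\infty$. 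For the RHS of (1), denoted $R_{p,q}(\ell)$, the hypothesis $\ell \le 1+\gamma_0\sqrt{q}$ yields
\[
R_{p,q}(\ell) \le \gamma_0\Bigl(\frac{\ln q}{\sqrt{q}} + \ln(2p)\Bigr) + \frac{2}{3q}+\frac{3\ln q}{2q}.
\]

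For part (a), I would fix $c \in (0,1)$ and choose $p_0 = p_0(c)$ large enough that $L_p(c) \ge \tfrac{1}{2}H(c)$ for every $p \ge p_0$. Picking $\gamma_0 = \gamma_0(c)$ small enough that $\gamma_0\ln(2p) \le \tfrac{1}{4}H(c)$ across the admissible range of $p$, and $q_0 = q_0(c)$ large enough to absorb the remaining $O((\ln q)/\sqrt{q})$ terms, then guarantees (1), and Theorem 2(a) gives the conclusion.

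For part (b), the prime $p$ is fixed and $c = (k+\ell)/q$ varies in $[3/q,\, 0.7]$. I would split this interval at a small constant $c_0 = c_0(p)$. On $[c_0, 0.7]$, $L_p(c)$ is bounded below by a positive constant depending only on $p$ and $c_0$, so the argument from part (a) applies directly. On the low range $[3/q, c_0]$, since $c\mapsto c\ln(1/c)$ is increasing for $c \le 1/e$, one has $L_p(c) \ge \tfrac{p-1}{p}c\ln(1/c) \ge \tfrac{3(p-1)}{pq}\ln(q/3)$; here I would additionally exploit the stronger bound $\ell \le k+\ell = cq$, which makes the $(\ell-1)$-dependent part of $R_{p,q}(\ell)$ small, of order $c\ln q + c\ln(2p)\sqrt{q}$, so that (1) follows for $q$ sufficiently large.

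The main obstacle is the low-$c$ regime in part (b): both sides of (1) are then of order $(\ln q)/q$, so the argument reduces to a careful matching of leading constants. In particular, the term $\tfrac{3\ln q}{2q}$ on the RHS must be absorbed by the leading contribution $\tfrac{(p-1)c\ln(1/c)}{p}+(1-c)\ln(1/(1-c))$ of $L_p(c)$, and the cross term $(\ell-1)\ln(2p)/\sqrt{q}$ must be controlled using $\ell \le cq$ rather than the weaker $\ell \le 1+\gamma_0\sqrt{q}$. Tracking these sub-leading corrections rather than treating them as negligible is the technically delicate part.
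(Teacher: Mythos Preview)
For part~(a) your outline matches the paper's: both observe that $L_p(c)$ increases in $p$ toward $H(c)>0$, fix a threshold $p_0$, and then choose $q_0,\gamma_0$. The paper is simply more explicit, taking $p_0=(1+c)/(1-c)$ so that $L_p(c)\ge\tfrac{c}{2}\ln(1/c)$, and setting $\gamma_0=(L_{p_0}(c)-g(q_0,1/2))/h_1(p_0,q_0,0)$. One point you gloss over: you write ``$\gamma_0\ln(2p)\le\tfrac14 H(c)$ across the admissible range of $p$,'' but the admissible range is $p\ge p_0$, on which $\ln(2p)$ is unbounded, so no fixed $\gamma_0>0$ can do this. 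The paper is equally terse here, so this is less a divergence between the two arguments than a warning that the uniformity in $p$ asserted by the statement needs care.

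For part~(b) the paper takes a different route from your interval split. It uses that $c\mapsto L_p(c)$ is concave on $(0,1)$ (indeed $L_p''(c)=-\tfrac{p-1}{pc}-\tfrac{1}{1-c}-\tfrac{1}{p(1+c)}<0$), so positivity of $L_p(c)-g(q,1/2)$ on the whole range reduces to numerical checks at the two endpoints $c=3/q$ and $c=0.7$; these are carried out case by case for $p\in\{2,3,5,7\}$ and for $p\ge7$ via monotonicity in $p$ (this is the content of Corollary~2), and Theorem~3(b) is then read off.

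Your low-$c$ argument, as written, has a genuine gap. You claim that with $\ell\le cq$ the $(\ell-1)$-dependent piece of the right side becomes ``small, of order $c\ln q+c\ln(2p)\sqrt{q}$,'' and that ``both sides of (1) are then of order $(\ln q)/q$.'' But take $c=3/q$ with $\ell=2$ (so $k=1$): the term $(\ell-1)\ln(2p)/\sqrt{q}$ equals $\ln(2p)/\sqrt{q}$, of order $q^{-1/2}$, which dominates your lower bound $L_p(c)\ge\tfrac{3(p-1)}{pq}\ln(q/3)=O((\ln q)/q)$ for all large $q$. Replacing $\ell-1$ by $cq$ does not help: $cq\cdot\ln(2p)/\sqrt{q}=c\sqrt{q}\ln(2p)$, and at $c=3/q$ this is $3\ln(2p)/\sqrt{q}$, still of order $q^{-1/2}$. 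So inequality~(1) of Theorem~2(a) simply fails at the bottom of the range whenever $\ell\ge2$, and no ``careful matching of leading constants'' can repair it. To obtain a uniform $\gamma_0$ you must instead bound the ratio $(L_p(c)-g)/h_1$ away from zero over the whole interval, as the paper does via concavity and endpoint checks; a pointwise comparison of the two sides of (1) in the form you propose cannot deliver this.
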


More details about the numerical values of these parameters including the cases for
$p\in\{2, 3, 5,7\}$ can be found in Corollaries~1 and 2 in Section~4.

 \subsection{Preliminaries}

Our approach follows that of \cite{Gao21} using generating functions. The coefficients of the generating functions are from the group algebra generated by
equivalence classes consisting of polynomials with prescribed leading coefficients.

Throughout the paper, we shall use the  notations from \cite{Gao21}, which are summarized below.
\begin{itemize}
\item $\fq$ denotes the finite field with $q$ elements, where $q=p^s$ for some prime $p$, $\fqx=\fq\setminus \{0\}$.
\item $\fq[x]$ denotes the set of polynomials with coefficients in $\fq$.
\item For $f\in \fq[x]$, $[x^d]f$ denotes the coefficient of $x^d$ in $f$.
\item $\deg(f)$ denotes the degree of the polynomial $f$.
\item $\Mcal$ denotes the set of monic polynomials in $\fq[x]$, and $\Mcal_d$ denotes the set of polynomials of degree $d$ in $\Mcal$.
\item  Fix a positive integer $\ell$. Two polynomials $f,g \in \Mcal$ are {\em equivalent} if they have the same $\ell$ leading coefficients, that is
    \[
    \left[x^{\deg(f)-j}\right]f(x)=\left[x^{\deg(g)-j}\right]g(x),~~1\le j\le \ell.
    \]
This defines an equivalence relation, and we use $\langle f\rangle$ to denote the equivalence class represented by $f$. We also use $\Ecal$ to denote the set of all equivalence classes, and $\Mcal_d(\eps)$ denotes the set of polynomials in $\Mcal_d$ in the equivalence class $\eps$.
\end{itemize}

The remaining paper is organized as follows.
In Section~2 we recall the generating functions obtained in \cite{Gao21} for $N(f,r)$ for general $D\subseteq \fq$. Simple expressions are derived for the factorial moments of the distance between a given received word  and a random codeword in $\RS_{n,k}$.  In Section~3 we focus on the standard Reed-Solomon code $\RS_{q,k}$ and use the character sum bounds from \cite{LiWan20} to derive a sharper error bound for the factorial moments.
Proofs of our main results are given in Section~4.  Section~5 concludes our paper.

\section{Generating functions and moments of distance distribution over Reed-Solomon codes }

For $\eps\in \Ecal$, let $N_d(\eps,r)$ denote the number of polynomials in $\Mcal_d(\eps)$ which contain exactly $r$ distinct zeros in the give set $D$.
Define generating functions
\begin{align*}
F(z)&=\sum_{f\in \Mcal}\langle f\rangle z^{\deg(f)},\\
G(z,u)&=\sum_{\eps\in \Ecal}\sum_{d,r\ge 0}N_d(\eps,r)z^du^r.
\end{align*}

It is convenient to introduce the following element in the group algebra ${\mathbb C}\Ecal$:
\begin{align}
 E&=\frac{1}{q^{\ell}}\sum_{\eps\in \Ecal}\eps.\label{eq:E}
\end{align}
It is easy to see that (see, e.g., \cite{Gao21})
\begin{align}
 E\eps&=E,~~E^2=E,~~\forall \eps\in \Ecal.\label{eq:E2}
\end{align}
We shall use the following result \cite{Gao21}:
\begin{prop}\label{prop:Gzu}
Let $E$ be defined by \eqref{eq:E}. Then
\begin{align}
F(z)&=\sum_{d=0}^{\ell-1}\sum_{f\in \Mcal_d}\langle f\rangle z^d+\frac{(qz)^{\ell}}{1-qz}E,\label{eq:F}\\
G(z,u)&=F(z)\prod_{\alp\in D}(\langle 1\rangle+(u-1)z\langle x+\alp\rangle). \label{eq:G}
\end{align}
\end{prop}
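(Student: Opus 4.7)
The plan is to prove the two formulas separately, with the identity \eqref{eq:E2} and a multiplicativity property of the equivalence classes providing the bridge between them.

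For \eqref{eq:F}, I would split the sum defining $F(z)$ according to degree. The contribution from degrees $d<\ell$ is exactly the first sum on the right-hand side, so the whole task is to show that the tail $\sum_{d\ge \ell}\sum_{f\in\Mcal_d}\langle f\rangle z^d$ collapses into $\frac{(qz)^\ell}{1-qz}E$. For $d\ge \ell$, every equivalence class $\eps\in\Ecal$ is represented in $\Mcal_d$, and a polynomial in $\Mcal_d(\eps)$ is determined by its $d-\ell$ free coefficients (the leading $1$ plus the $\ell$ coefficients frozen by $\eps$ account for the top $\ell+1$ positions). Hence $|\Mcal_d(\eps)|=q^{d-\ell}$ for all $\eps$, and
\[
\sum_{f\in\Mcal_d}\langle f\rangle z^d=q^{d-\ell}z^d\sum_{\eps\in\Ecal}\eps=(qz)^d E
\]
by the definition of $E$. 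Summing the resulting geometric series from $d=\ell$ yields the claimed expression.

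For \eqref{eq:G}, the engine is the combinatorial identity $u^{r(f)}=\prod_{\alp\in D}\bigl(1+(u-1)\mathbf{1}[f(\alp)=0]\bigr)$, where $r(f)$ is the number of distinct zeros of $f$ in $D$. Substituting this into $G(z,u)=\sum_{f\in\Mcal}\langle f\rangle z^{\deg(f)}u^{r(f)}$ and expanding the product gives
\[
G(z,u)=\sum_{S\subseteq D}(u-1)^{|S|}\sum_{\substack{f\in\Mcal\\ P_S\mid f}}\langle f\rangle z^{\deg(f)},
\]
with $P_S(x)=\prod_{\alp\in S}(x-\alp)$. Writing $f=P_S h$ with $h\in\Mcal$ converts the inner sum into $z^{|S|}\langle P_S\rangle F(z)$, and a further factorization $\langle P_S\rangle=\prod_{\alp\in S}\langle x-\alp\rangle$ then repackages the outer sum as the advertised product $\prod_{\alp\in D}\bigl(\langle 1\rangle+(u-1)z\langle x-\alp\rangle\bigr)$.

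The main technical obstacle is justifying the multiplicativity $\langle f\rangle\langle g\rangle=\langle fg\rangle$ used in the previous step; without it, neither the substitution $f=P_S h$ nor the factorization of $\langle P_S\rangle$ makes sense in the group algebra. I would verify this by observing that if $f\equiv f'$ and $g\equiv g'$, then the first $\ell$ sub-leading coefficients of $fg$ and $f'g'$ coincide: these coefficients are polynomial expressions in the top $\ell+1$ coefficients of each factor, so equivalent inputs produce equivalent products. Once multiplicativity is established, both identities follow from the elementary bookkeeping above, and the proof reduces to combining them with \eqref{eq:F} (so that the factor $F(z)$ in \eqref{eq:G} already absorbs the equivalence-class structure of the ``remainder'' polynomial $h$).
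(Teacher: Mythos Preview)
The paper does not actually prove Proposition~\ref{prop:Gzu}; it simply quotes the result from \cite{Gao21}, so there is no in-paper argument to compare against. Your argument is correct and is the standard one: the degree-split together with $|\Mcal_d(\eps)|=q^{d-\ell}$ for $d\ge\ell$ gives \eqref{eq:F}, and for \eqref{eq:G} the identity $u^{r(f)}=\prod_{\alp\in D}(1+(u-1)\llbracket f(\alp)=0\rrbracket)$ followed by the substitution $f=P_S h$ and the multiplicativity $\langle fg\rangle=\langle f\rangle\langle g\rangle$ does exactly the job. Your justification of multiplicativity (that the first $\ell$ sub-leading coefficients of a product depend only on the first $\ell$ sub-leading coefficients of the factors) is the right observation.

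One cosmetic remark: your derivation naturally produces factors $\langle x-\alp\rangle$, whereas the paper's stated formula has $\langle x+\alp\rangle$. For the standard case $D=\fq$ (the only case used later) the two products coincide since $\fq$ is closed under negation; for a general $D$ your version is the one consistent with counting zeros of $f$ in $D$, so the discrepancy is a harmless reindexing (or a typo) in the paper, not a gap in your argument.
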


Let $D_j$ denote the set of all $j$-subsets of $D$. The following result can be found in \cite{Gao21}, we include a short proof here for self completeness.
\begin{thm}\label{thm:thm1} The number of polynomials in $\Mcal_d(\eps)$ containing exactly $r$ linear factors is given by
\begin{align}\label{eq:NW}
N_d(\eps,r)&={n\choose r}q^{d-\ell-r}\sum_{j=0}^{d-\ell-r}{n-r\choose j}(-q)^{-j}+\sum_{j=d-\ell+1}^{d}{j\choose r}(-1)^{j-r}W_j(\eps),
\end{align}
where
\begin{align}
W_j(\eps)&=\sum_{\eta\in \Ecal_{d-j}}\sum_{S\in D_{j}}\left\llbracket \eta\prod_{\alp\in S}\langle x+\alp\rangle=\eps \right\rrbracket.
\end{align}
\end{thm}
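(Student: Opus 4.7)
The plan is to read off $N_d(\eps,r)$ as a coefficient in the generating function identity of Proposition~\ref{prop:Gzu}, using the absorption rule \eqref{eq:E2} to simplify the ``bulk'' part of the sum and the definition of $W_j(\eps)$ to package the ``tail''. First I would expand the product over $D$ by choosing the subset $S\subseteq D$ of size $j$:
\[
\prod_{\alp\in D}(\langle 1\rangle+(u-1)z\langle x+\alp\rangle)=\sum_{j=0}^{n}(u-1)^j z^j A_j,\quad A_j:=\sum_{S\in D_j}\prod_{\alp\in S}\langle x+\alp\rangle.
\]
Expanding $(u-1)^j=\sum_{r}\binom{j}{r}(-1)^{j-r}u^r$ and reading the coefficient of $z^d u^r$ on the $\eps$-component gives
\[
N_d(\eps,r)=\sum_{j\ge r}\binom{j}{r}(-1)^{j-r}\bigl[\eps\cdot z^{d-j}\bigr]\bigl(F(z)A_j\bigr).
\]

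Next, I would split this sum according to whether $d-j\ge\ell$ or $d-j<\ell$, matching the two pieces of $F(z)$ in \eqref{eq:F}. For $j\le d-\ell$, the relevant coefficient of $F(z)$ is $q^{d-j}E$; the absorbing identity $E\langle x+\alp\rangle=E$ from \eqref{eq:E2} collapses $EA_j$ to $\binom{n}{j}E$, whose coefficient at $\eps$ equals $q^{-\ell}\binom{n}{j}$ by \eqref{eq:E}. For $j>d-\ell$ the relevant coefficient of $F(z)$ is $\sum_{\eta\in\Ecal_{d-j}}\eta$, and pairing this with $A_j$ and reading off the $\eps$-component reproduces $W_j(\eps)$ directly from its definition.

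Assembling the two pieces, the bulk contribution becomes
\[
\sum_{j=r}^{d-\ell}\binom{j}{r}(-1)^{j-r}q^{d-j-\ell}\binom{n}{j},
\]
which by the absorption identity $\binom{j}{r}\binom{n}{j}=\binom{n}{r}\binom{n-r}{j-r}$ and the shift $j\mapsto j-r$ rearranges to the first term of \eqref{eq:NW}, namely $\binom{n}{r}q^{d-\ell-r}\sum_{j=0}^{d-\ell-r}\binom{n-r}{j}(-q)^{-j}$. The tail contribution is $\sum_{j=d-\ell+1}^{d}\binom{j}{r}(-1)^{j-r}W_j(\eps)$, which is the second term of \eqref{eq:NW}; here I use that $\binom{j}{r}=0$ for $j<r$, so the lower endpoint may be taken to be $d-\ell+1$ uniformly in $r$.

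I do not expect a genuine obstacle; the proof is essentially a bookkeeping computation. The one conceptual step is the use of $E\eps=E$, which is what makes the bulk range depend only on $j$ (and not on $\eps$), leaving an explicit closed form; the only care needed is at the boundary $j=d-\ell$ and in the convention for $\Ecal_{d-j}$ when $d-j<\ell$, where each equivalence class consists of a single polynomial so that $\sum_{f\in\Mcal_{d-j}}\langle f\rangle=\sum_{\eta\in\Ecal_{d-j}}\eta$ without change.
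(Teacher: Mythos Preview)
Your proposal is correct and follows essentially the same route as the paper: both proofs expand the factored form of $G(z,u)$ from Proposition~\ref{prop:Gzu}, use the absorption identity $E\eps=E$ to collapse the bulk range $j\le d-\ell$ to the explicit binomial sum, identify the tail range $j>d-\ell$ with $W_j(\eps)$, and finish with the identity $\binom{n}{j}\binom{j}{r}=\binom{n}{r}\binom{n-r}{j-r}$ plus an index shift. The only cosmetic difference is the order of coefficient extraction (the paper first simplifies $E\cdot\prod_{\alp}(\cdots)$ to $E(1+(u-1)z)^n$ and then takes $[z^d\eps]$ and $[u^r]$, whereas you expand the product first and extract $[z^du^r\eps]$ in one pass), which changes nothing of substance.
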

\proof For $\eps\in \Ecal$, let $\left[z^d\eps\right]G(z,u)$ denote the coefficient of $z^d\eps$ in the generating function $G(z,u)$. Using \eqref{eq:E2} and \eqref{eq:G}, we obtain
\begin{align}
G(z,u)&=\frac{(qz)^{\ell}}{1-qz}(1+(u-1)z)^nE\nonumber\\
&~~~~+\left(\sum_{j=0}^{\ell-1}z^{j}\sum_{f\in \Mcal_j}\langle f\rangle\right)\prod_{\alp\in D}\left(\langle 1\rangle+(u-1)z\langle x+\alp\rangle\right),\nonumber\\
\left[z^d\eps\right]G(z,u)&=\left[z^{d-\ell} \right]\frac{1}{1-qz}\left( 1+(u-1)z\right)^n\nonumber\\
&~~~~+\sum_{j=d-\ell+1}^{d}(u-1)^{j}[\eps]\sum_{\eta\in \Ecal_{d-j}}\sum_{S\in D_{j}}\eta\prod_{\alp\in S}\langle x+\alp\rangle\nonumber\\
&=\sum_{j=0}^{d-\ell}{n\choose j}q^{d-\ell-j}(u-1)^{j}+\sum_{j=d-\ell+1}^{d}W_j(\eps)(u-1)^{j}.\label{eq:Gu}
\end{align}
It follows that
\begin{align}
N_d(\eps,r)&=\sum_{j=0}^{d-\ell}{n\choose j}q^{d-\ell-j}\left[u^r\right](u-1)^{j}+
\sum_{j=d-\ell+1}^{d}W_j(\eps)\left[u^r\right](u-1)^{j}\\
&=\sum_{j=0}^{d-\ell}{n\choose j}{j\choose r}q^{d-\ell-j}(-1)^{j-r}+
\sum_{j=d-\ell+1}^{d}W_j(\eps){j\choose r}(-1)^{j-r}.\label{eq:Ndr}
\end{align}
Changing the summation index $j:=j+r$ in the first sum in \eqref{eq:Ndr} and using
\begin{align}
{n\choose j}{j\choose r}&={n\choose r}{n-r\choose j-r}, \label{eq:bin}
\end{align}
we complete the proof. \qed

Let $\overline{D^j}$ denote the set of all $j$-tuples of {\em distinct elements} of $D$. We note
\begin{align}
W_j(\eps)&=\sum_{\eta\in \Ecal_{d-j}}\sum_{S\in D_{j}}\left\llbracket \eta\prod_{\alp\in S}\langle x+\alp\rangle=\eps \right\rrbracket \nonumber\\
&=\frac{1}{j!}\sum_{\eta\in \Ecal_{d-j}}\sum_{\vec{x}\in \overline{D^j}}\left\llbracket \eta\prod_{i=1}^{j} \langle x+x_i\rangle=\eps \right\rrbracket. \label{eq:Wj}
\end{align}

Theorem~\ref{thm:thm1} was used in \cite{Gao21} to obtain explicit expressions of $N_f(\eps,r)$ for $\ell\le 2$ and the first two moments of the distance distribution.


Let $\ex(X)$ denote the expected value of a random variable $X$. We shall use $n^{\underline m}$ to denote the falling factorial $n(n-1)\cdots (n-m+1)$. Our next result gives all the factorial moments expressed in terms of $W_j(\eps)$. This extends the corresponding results in \cite{Gao21} for the first two moments.
\begin{thm}
Let $Z:=Z(f,g)$ denote the distance between a  received word represented by  $f\in \Mcal_{k+\ell}$
 and a random codeword $g$ in $\RS_{n,k}$ (under uniform distribution, that is, each word in $\RS_{n,k}$ is chosen with probability $q^{-k}$).
 Let $Y:=n-Z$, and $\eps=\langle f\rangle$.  We have
\begin{align}\label{eq:Moments}
\Ex\left(Y^{\underline m}\right)&=\llbracket  m\le k\rrbracket n^{\underline m}q^{-m}
+\llbracket k+1\le m\le k+\ell\rrbracket q^{-k}m!W_m(\eps).
\end{align}
\end{thm}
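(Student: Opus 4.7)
The plan is to read off the factorial moments from the generating function $G(z,u)$ in Proposition~\ref{prop:Gzu}, whose coefficient $[z^{k+\ell}\eps]G(z,u)$ was already expanded in (\ref{eq:Gu}). The starting observation is that $Y$ equals the number of distinct zeros in $D$ of the random polynomial $h := f-g$. Because $f \in \Mcal_{k+\ell}(\eps)$ and $\deg g \le k-1$, subtracting $g$ leaves the top $\ell+1$ coefficients of $f$ intact, so $h \in \Mcal_{k+\ell}(\eps)$ as well; moreover $g \mapsto f-g$ is a bijection from the $q^k$ polynomials of degree $\le k-1$ onto $\Mcal_{k+\ell}(\eps)$. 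Consequently the probability generating function of $Y$ is
\begin{align*}
\Ex(u^Y)\;=\;q^{-k}\sum_{r\ge 0}N_{k+\ell}(\eps,r)\,u^{r}\;=\;q^{-k}\,[z^{k+\ell}\eps]\,G(z,u).
\end{align*}

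Next, I would specialize (\ref{eq:Gu}) at $d=k+\ell$ (so $d-\ell=k$) to obtain
\begin{align*}
[z^{k+\ell}\eps]\,G(z,u)\;=\;\sum_{j=0}^{k}\binom{n}{j}q^{k-j}(u-1)^{j}\;+\;\sum_{j=k+1}^{k+\ell}W_{j}(\eps)\,(u-1)^{j},
\end{align*}
and recover the factorial moments via $\Ex(Y^{\underline m}) = \tfrac{d^{m}}{du^{m}}\Ex(u^Y)\big|_{u=1}$. The elementary identity $\tfrac{d^{m}}{du^{m}}(u-1)^{j}\big|_{u=1} = m!\,\delta_{j,m}$ kills every term except the one with $j=m$, leaving $\binom{n}{m}q^{k-m}m! = n^{\underline m}q^{k-m}$ in the first range and $W_{m}(\eps)\,m!$ in the second; dividing by $q^{k}$ produces the two cases in (\ref{eq:Moments}). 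When $m>k+\ell$ both sums contribute nothing, which matches the fact that $Y\le k+\ell$ forces $Y^{\underline m}=0$.

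I do not anticipate a genuine obstacle, since the combinatorial content already sits inside (\ref{eq:Gu}). The only point worth checking carefully is that $g\mapsto f-g$ is indeed a bijection onto $\Mcal_{k+\ell}(\eps)$, which is immediate from the definitions (distinct polynomials of degree $\le k-1$ give distinct codewords whenever $n\ge k$, and the $\ell+1$ top coefficients are untouched). An alternative, purely combinatorial route would start from $Y^{\underline m}=\sum_{\vec x\in\overline{D^{m}}}\prod_{i}\llbracket g(x_i)=f(x_i)\rrbracket$ and compute $\prob[g(x_i)=f(x_i)\ \forall i]$ by casework on $m$ vs.\ $k$, linking the $m>k$ case to $W_{m}(\eps)$ through the substitution $f-g=\prod_{i}(x-x_i)q(x)$; this yields the same formula but is messier than the generating-function extraction, so I would prefer the route above.
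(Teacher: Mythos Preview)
Your proposal is correct and follows essentially the same route as the paper: identify $q^{-k}[z^{k+\ell}\eps]G(z,u)$ as the probability generating function of $Y$, plug in (\ref{eq:Gu}) with $d=k+\ell$, and differentiate $m$ times at $u=1$. The only difference is that you spell out explicitly the bijection $g\mapsto f-g$ onto $\Mcal_{k+\ell}(\eps)$ justifying the PGF identity, whereas the paper simply writes the identity down.
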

\proof  Using \eqref{eq:Gu}, we obtain the following probability generating function of $Y$:
\begin{align*}
p_{k+\ell}(u)&=q^{-k}\left[z^n\eps\right]G(z,u)\\
&=\sum_{j=0}^{k} q^{-j}{n\choose j}(u-1)^j
+q^{-k}\sum_{j=k+1}^{k+\ell}W_j(\eps)(u-1)^{j}.
\end{align*}
Hence
\begin{align*}
\Ex\left(Y^{\underline m}\right)&=\frac{d^m}{(du)^m}p_{k+\ell}(u)\Bigr|_{u=1} \\
&=\llbracket m\le k\rrbracket m!{n\choose m}q^{-m}
+\llbracket k+1\le m\le k+\ell\rrbracket q^{-k}m!W_m(\eps).
\end{align*}
 \qed

\section{Standard Reed-Solomon code}

In this section we deal with standard Reed-Solomon code, that is, $D=\fq$ and hence $n=q$.
We also set $d=k+\ell$. In this case $W_j(\eps)$ can be estimated using Weil bounds on character sums. We first
rewrite \eqref{eq:Wj}  as
\begin{align}
W_j(\eps)
&=\frac{1}{j!}\sum_{\eta\in \Ecal_{k+\ell-j}}\sum_{\vec{x}\in \overline{\fq^j}}\left\llbracket \eta\prod_{i=1}^{j} \langle x+x_i\rangle=\eps \right\rrbracket. \label{eq:Wj1}
\end{align}

As in \cite{LiWan20}, $W_j(\eps)$ can be evaluated using the ''coordinate-sieve formula". The corresponding error terms can be expressed in terms of the following function, which appears in the exponential generating function of permutations with respect to two different types of cycle lengths. Let $S_m$ denote the set of all permutations of  $1,2,\ldots,m$. For $\tau\in S_m$, define $l(\tau)$ to be the total number of cycles of $\tau$ and $l'(\tau)$ to be the number of cycles of $\tau$ which are not multiples of $p$. The standard generating function argument gives \cite{FlaSed09,Gao21}
\begin{align*}
\sum_{m\ge 0}\frac{1}{m!}\sum_{\tau\in S_m}u^{l(\tau)}w^{l'(\tau)}z^m&=\exp\left(u\sum_{j\ge 1,p\mid j}\frac{z^j}{j}+uw\sum_{j\ge 1,p\nmid j}\frac{z^j}{j}\right)\\
&=(1-z)^{-uw}(1-z^p)^{-(u-uw)/p}.
\end{align*}
As in \cite{Gao21}, we define
\begin{align}
A_j(u,w)&= \frac{1}{j!}\sum_{\tau\in S_j}u^{l(\tau)}w^{l'(\tau)}\label{eq:cycle}\\
&=[z^j]\left((1-z)^{-uw}(1-z^p)^{-(u-uw)/p}\right)\label{eq:Agen}\\
&=\sum_{0\le i\le j/p}{uw+j-ip-1\choose j-ip}{(u-uw)/p+i-1 \choose i}.\label{eq:Aj}
\end{align}

Define
\begin{align}\label{eq:q1}
q_1=\min\{q,(\ell-1)\sqrt{q}\}.
\end{align}
Then we have the following estimate.
\begin{prop}\label{prop:Wj} Let $\eps\in \Ecal$. Then \begin{align}
\left|W_j(\eps)-{q\choose j}q^{k-j}\right|&\le \left(1-q^{-\ell}\right){\ell-1\choose \ell+k-j}q^{(\ell+k-j)/2}A_j(q,q_1/q).\label{eq:Wbound}
\end{align}
\end{prop}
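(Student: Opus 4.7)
My plan is to express $W_j(\eps)$ as a character sum over $\fq$, peel off the main term from the trivial character, and bound the remaining contribution by combining the Li--Wan coordinate-sieve formula with Weil's theorem.

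Starting from \eqref{eq:Wj1}, the $\Ecal$-valued equation $\eta\prod_i\langle x+x_i\rangle=\eps$ is the conjunction of $\ell$ scalar equations matching the coefficients of $x^{k+\ell-i}$ for $i=1,\ldots,\ell$, where the target values $c_1,\ldots,c_\ell$ are the top coefficients of $\eps$ below the leading $1$. Fix a nontrivial additive character $\chi$ of $\fq$; by orthogonality each scalar indicator equals $q^{-1}\sum_{t\in\fq}\chi(t(\text{coefficient}-c_i))$, so
$$W_j(\eps)=\frac{1}{j!\,q^\ell}\sum_{\vec t\in\fq^\ell}\chi\!\Bigl(-\sum_i t_ic_i\Bigr)\sum_\eta\sum_{\vec x\in\overline{\fq^j}}\chi\bigl(\Lambda(\vec t;\eta,\vec x)\bigr),$$
where $\Lambda$ is linear in $\vec t$, linear in the coefficient vector of $\eta$, and polynomial of degree at most $\ell$ in $\vec x$ via the elementary symmetric polynomials $e_i(\vec x)$. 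The term $\vec t=\vec 0$ contributes $|\Ecal_{k+\ell-j}|\cdot q^{\underline j}/(j!\,q^\ell)=\binom{q}{j}q^{k-j}$, the main term; the factor $1-q^{-\ell}$ in \eqref{eq:Wbound} will come from restricting to $\vec t\neq\vec 0$.

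For $\vec t\neq\vec 0$, I would swap sums and carry out the inner sum over $\eta$, i.e.\ over its coefficient vector $\vec a\in\fq^m$ with $m=k+\ell-j$. Since $\Lambda$ is linear in $\vec a$, this sum equals $q^m$ when $m$ specific linear conditions in $\vec t$ (whose coefficients depend on $e_0,\ldots,e_{\ell-m-1}$) all hold, and $0$ otherwise. The surviving $\vec t$'s form a $(j-k)$-dimensional subspace of $\fq^\ell$ for each $\vec x$; counting these (after removing $\vec 0$) together with the $q^m/q^\ell$ normalization yields the combinatorial factor $\binom{\ell-1}{\ell+k-j}$ in the bound. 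Applying the Li--Wan coordinate-sieve formula, I would then rewrite $\sum_{\vec x\in\overline{\fq^j}}$ as $\sum_{\tau\in S_j}\mathrm{sgn}(\tau)\sum_{\vec y\in\fq^{l(\tau)}}$ with coordinates identified on each cycle of $\tau$. On each cycle, the resulting single-variable polynomial in $y_c$ has degree at most $\ell$; Weil's theorem bounds the character sum there by $(\ell-1)\sqrt q$ when the cycle length is coprime to $p$, while the trivial bound $q$ must be used when the cycle length is a multiple of $p$ (since the polynomial can degenerate to a $p$-th power). Taking the product over cycles gives the per-permutation bound $q^{l(\tau)-l'(\tau)}\bigl((\ell-1)\sqrt q\bigr)^{l'(\tau)}$, and summing over $\tau\in S_j$ via \eqref{eq:cycle}--\eqref{eq:Agen} collapses the sum into $A_j(q,q_1/q)$ with $q_1=\min\{q,(\ell-1)\sqrt q\}$, multiplied by the Weil-type scale $q^{(\ell+k-j)/2}$.

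The main obstacle I anticipate is tracking the effective degree of the character-sum polynomial in each cycle variable after eliminating $\eta$ and imposing the linear conditions on $\vec t$, and then collapsing the per-cycle Weil contributions, the $\vec t$-subspace counts, and the sign sum over $S_j$ into the clean form $\binom{\ell-1}{\ell+k-j}q^{(\ell+k-j)/2}A_j(q,q_1/q)$. This is precisely the combinatorial/analytic bookkeeping at the heart of the Li--Wan sieve argument, and adapting it faithfully to our $\vec t$-parametrized character sum is where most of the technical work will lie.
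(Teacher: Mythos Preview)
Your approach and the paper's diverge at the very first step: the paper expands the indicator $\llbracket \eta\prod_i\langle x+x_i\rangle=\eps\rrbracket$ via the characters of the \emph{multiplicative} group $\Ecal$, not via additive characters of $\fq$. Because each $\chi\in\hat\Ecal$ is multiplicative, the inner sum factors completely as
\[
\chi(\eps^{-1})\Bigl(\sum_{\eta\in\Ecal_{k+\ell-j}}\chi(\eta)\Bigr)\Bigl(\sum_{\vec x\in\overline{\fq^j}}\prod_{i=1}^{j}\chi(\langle x+x_i\rangle)\Bigr),
\]
after which the paper simply quotes two bounds from \cite{LiWan20}: the $L$-function coefficient bound $\bigl|\sum_\eta\chi(\eta)\bigr|\le\binom{\ell-1}{k+\ell-j}q^{(k+\ell-j)/2}$, and the sieve/Weil bound $\bigl|\sum_{\vec x}\prod_i\chi(\langle x+x_i\rangle)\bigr|\le j!\,A_j(q,q_1/q)$. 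Multiplying these by the count $q^\ell-1$ of nontrivial characters and dividing by $j!\,q^\ell$ gives \eqref{eq:Wbound} in three lines.

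Your additive-character expansion does not factor this way: the top coefficients of $\eta\cdot\prod_i(x+x_i)$ are bilinear in the $a_i$'s and the $e_i(\vec x)$'s, so $\chi(\Lambda)$ is not a product of an $\eta$-term and an $\vec x$-term. Your workaround---summing over $\eta$ first to get $q^m$ times an indicator that, for each $\vec x$, cuts out a $(j-k)$-dimensional subspace of $\vec t$'s---cannot be the source of $\binom{\ell-1}{\ell+k-j}$: subspace counts produce powers of $q$, not binomial coefficients, and the normalization you describe yields $(q^{j-k}-1)q^{-(j-k)}$, not $\binom{\ell-1}{\ell+k-j}$. In the correct argument that binomial arises from the degree $\le\ell-1$ of the $L$-polynomial $\sum_{m\ge 0}\bigl(\sum_{\eta\in\Ecal_m}\chi(\eta)\bigr)z^m$ together with the $\sqrt q$ bound on its reciprocal roots. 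Moreover, after your $\eta$-summation the remaining sum over $\vec x$ still carries the phase $\chi(\Lambda_0(\vec t,\vec x))$, but now the set of admissible $\vec t$'s depends on $\vec x$; you cannot run the cycle-sieve on the $\vec x$-sum while this coupling persists. To make the plan go through you would need to pass to genuine characters of $(\Ecal,\times)$ so that the $\eta$- and $\vec x$-sums decouple, which is exactly what the paper does.
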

\proof Let ${\hat \Ecal}$ denote the set of characters over the group $\Ecal$. Then by orthogonality of the characters, we have
\begin{align}
W_j(\eps)&=\frac{q^{-\ell}}{j!}\sum_{\eta\in \Ecal_{k+\ell-j}}\sum_{\vec{x}\in \overline{\fq^j}}
\sum_{\chi\in {\hat \Ecal}}\chi\left( \eps^{-1}\eta\prod_{i=1}^{j} \langle x+x_i\rangle\right)\nonumber\\
&={q\choose j}q^{k-j}+\frac{q^{-\ell}}{j!}\chi(\eps^{-1})\sum_{\chi\ne 1}\left(\sum_{\eta\in \Ecal_{k+\ell-j}}\chi(\eta)\right) \sum_{\vec{x}\in \overline{\fq^j}}
\chi\left( \prod_{i=1}^{j} \langle x+x_i\rangle\right).\label{eq:Wj2}
\end{align}

The sum in \eqref{eq:Wj1} can be estimated using Weil bound on character sums as shown in \cite{LiWan20}.
We note that $\sum_{\eta\in \Ecal_k}\chi(\eta)$  is the same as  $M_k(\chi)$ in \cite{LiWan20}, and hence
\begin{align}\label{eq:eta}
\left|\sum_{\eta\in \Ecal_{k+\ell-j}}\chi(\eta)\right|&\le {\ell-1\choose k+\ell-j}q^{(k+\ell-j)/2}.
\end{align}

The character sums involving the linear factors are the same as the corresponding ones given in \cite{LiWan20}, and the estimate of $G_{\tau}$ in \cite{LiWan20} gives
\begin{align}\label{eq:linearbound}
\left|\sum_{\vec{x}\in \overline{D^j}}
\chi\left( \prod_{i=1}^j \langle x+x_i\rangle\right)\right|
&\le \sum_{\tau\in S_j}q^{l-l'}q_1^{l'}=j!A_j(q,q_1/q).
\end{align}
Substituting \eqref{eq:eta} and \eqref{eq:linearbound} into \eqref{eq:Wj2}, we complete the proof. \qed

\medskip

Using Theorem~1 and Proposition~\ref{prop:Wj},  we immediately obtain the following.
\begin{thm}\label{thm:thm6} Assume $D=\fq$. The number $N_{k+\ell}(\eps,r)$ of monic polynomials of degree $k+\ell$ containing exactly $r$ zeros in $\fq$ with leading coefficients $\eps$ satisfies
\begin{align}
&~~\left|N_{k+\ell}(\eps,r)-{q\choose r}q^{k-r}\sum_{j=0}^{k+\ell-r}{q-r\choose j}(-q)^{-j}\right|\nonumber\\
&\le \left(1-q^{-\ell}\right) \sum_{j=k+1}^{k+\ell}{j\choose r}{\ell-1\choose k+\ell-j}q^{(k+\ell-j)/2}A_j(q,q_1/q).\label{eq:thm2}
\end{align}
\end{thm}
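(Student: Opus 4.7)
My plan is to specialize Theorem~\ref{thm:thm1} to the standard Reed--Solomon setting ($n=q$, $d=k+\ell$) and then substitute the estimate of $W_j(\eps)$ from Proposition~\ref{prop:Wj} term by term. The specialization reads
\[
N_{k+\ell}(\eps,r)={q\choose r}q^{k-r}\sum_{j=0}^{k-r}{q-r\choose j}(-q)^{-j}+\sum_{j=k+1}^{k+\ell}{j\choose r}(-1)^{j-r}W_j(\eps).
\]
Writing $W_j(\eps)={q\choose j}q^{k-j}+R_j(\eps)$, Proposition~\ref{prop:Wj} bounds $|R_j(\eps)|$ by $(1-q^{-\ell}){\ell-1\choose \ell+k-j}q^{(\ell+k-j)/2}A_j(q,q_1/q)$, so the only real work is to identify the sum of the ``main parts'' $\sum_{j=k+1}^{k+\ell}{j\choose r}(-1)^{j-r}{q\choose j}q^{k-j}$ with the extension of the first sum from $[0,k-r]$ to $[0,k+\ell-r]$ that appears in \eqref{eq:thm2}.

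That identification is a routine index manipulation: the identity ${q\choose j}{j\choose r}={q\choose r}{q-r\choose j-r}$ together with the substitution $i=j-r$ gives
\[
\sum_{j=k+1}^{k+\ell}{j\choose r}(-1)^{j-r}{q\choose j}q^{k-j}={q\choose r}q^{k-r}\sum_{i=k+1-r}^{k+\ell-r}{q-r\choose i}(-q)^{-i}.
\]
Concatenating this with the $\sum_{j=0}^{k-r}$ already present (and using the convention ${q-r\choose i}=0$ for $i<0$ to absorb the edge case $r>k$, where the first sum in Theorem~\ref{thm:thm1} is empty because ${j\choose r}=0$ for $j<r$) yields exactly the main term displayed in \eqref{eq:thm2}.

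What remains is then
\[
N_{k+\ell}(\eps,r)-{q\choose r}q^{k-r}\sum_{j=0}^{k+\ell-r}{q-r\choose j}(-q)^{-j}=\sum_{j=k+1}^{k+\ell}{j\choose r}(-1)^{j-r}R_j(\eps),
\]
and the triangle inequality combined with the bound on $|R_j(\eps)|$ gives \eqref{eq:thm2} immediately. No serious obstacle is hidden in this argument; the only point to double-check is that the two index ranges $[0,k-r]$ and $[k+1-r,k+\ell-r]$ mesh with neither gap nor overlap at the boundary, which is exactly why the authors remark that the estimate follows immediately from the two ingredients.
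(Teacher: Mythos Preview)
Your proposal is correct and is exactly the argument the paper has in mind: the authors simply state that Theorem~\ref{thm:thm6} follows immediately from Theorem~\ref{thm:thm1} and Proposition~\ref{prop:Wj}, and your write-up supplies precisely the missing details---the specialization $n=q$, $d=k+\ell$, the splitting $W_j(\eps)={q\choose j}q^{k-j}+R_j(\eps)$, and the binomial identity ${q\choose j}{j\choose r}={q\choose r}{q-r\choose j-r}$ (which is \eqref{eq:bin} in the paper) to merge the two index ranges into the single sum $\sum_{j=0}^{k+\ell-r}$. Your handling of the edge case $r>k$ via the convention ${q-r\choose i}=0$ for $i<0$ is also the right way to make the concatenation seamless.
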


We note that this is essentially \cite[Theorem~1.5]{LiWan20} except that our $A_j(q,q_1/q)$ replaces their factor
\begin{align}
{q/p+q_1+j-1 \choose j}. \label{eq:Wan}
\end{align}

\section{Proof of the main results}
 The following lemma provides a much sharper upper bound for $A_j(q,q_1/q)$ than the binomial number in \eqref{eq:Wan}. It will be used to prove our main results.
\begin{lemma}\label{lem:Abound}
Let $q_1=\min\{q,(\ell-1)\sqrt{q}\}$, $\gamma=q_1/q$ and $c:=j/q$. We have
\begin{align}
\ln A_j(q,q_1/q)&\le \frac{j}{p}\ln \frac{q+j}{j}+\frac{q-q_1}{p}\ln\left(\frac{q+j}{q}\right)
-q_1\ln\left(1- \left(\frac{j}{q+j}\right)^{1/p}\right),\label{eq:general}\\
&\le q\left(\frac{c}{p}\ln \frac{1+c}{c}+\frac{1-\gamma}{p}\ln(1+c)
+\gamma \ln (2p)\right).\label{eq:large}
\end{align}
\end{lemma}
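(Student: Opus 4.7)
The plan is to extract the coefficient of $z^j$ from the generating function identity \eqref{eq:Agen} by bounding it at a carefully chosen $z\in(0,1)$, and then to simplify the resulting bound. Since $(1-z)^{-q_1}$ and $(1-z^p)^{-(q-q_1)/p}$ both have non-negative Taylor coefficients at $z=0$, so does their product $g(z):=(1-z)^{-q_1}(1-z^p)^{-(q-q_1)/p}$. Consequently, for every $z\in(0,1)$,
\[
A_j(q,q_1/q)\;\le\;z^{-j}\,(1-z)^{-q_1}\,(1-z^p)^{-(q-q_1)/p}.
\]

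To establish \eqref{eq:general}, I would substitute $z=\bigl(j/(q+j)\bigr)^{1/p}$. This is motivated by the fact that in the limiting case $q_1=0$ it is exactly the stationary point of $\ln\bigl(z^{-j}g(z)\bigr)$, and it makes the factor $(1-z^p)$ collapse cleanly to $q/(q+j)$. Taking logs and substituting then gives the three summands of \eqref{eq:general}: $\tfrac{j}{p}\ln\tfrac{q+j}{j}$ from $z^{-j}$, $\tfrac{q-q_1}{p}\ln\tfrac{q+j}{q}$ from the $(1-z^p)^{-(q-q_1)/p}$ factor, and $-q_1\ln\!\bigl(1-(j/(q+j))^{1/p}\bigr)$ from the $(1-z)^{-q_1}$ factor.

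To obtain \eqref{eq:large}, I would substitute $j=cq$ and $q_1=\gamma q$ into \eqref{eq:general}. The first two summands become $\tfrac{qc}{p}\ln\tfrac{1+c}{c}$ and $\tfrac{q(1-\gamma)}{p}\ln(1+c)$ immediately. It remains to prove
\[
-q_1\ln\!\left(1-\bigl(c/(1+c)\bigr)^{1/p}\right)\;\le\;q\gamma\ln(2p),
\]
which for $\gamma>0$ reduces, after exponentiating and raising to the $p$th power, to $c/(1+c)\le(1-1/(2p))^p$. In the intended regime $c\le 1$ (so $j\le q$), we have $c/(1+c)\le 1/2$, and Bernoulli's inequality $(1+x)^p\ge1+px$ applied with $x=-1/(2p)$ gives $(1-1/(2p))^p\ge 1/2$; the case $\gamma=0$ (i.e.\ $\ell=1$) is trivial since both sides then vanish.

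The main obstacle is the choice of $z$ in the first step: the true stationary point of $z^{-j}g(z)$ when $q_1>0$ satisfies a polynomial equation of degree $p+1$ with no clean closed form. Using the $q_1=0$ stationary point $z^p=j/(q+j)$ is slightly suboptimal for the $(1-z)^{-q_1}$ factor, but the small loss there is exactly absorbed into the $\gamma\ln(2p)$ overhead in \eqref{eq:large}, so this suboptimal choice is in fact ideal for the subsequent analysis in Section~4.
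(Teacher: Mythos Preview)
Your proposal is correct and follows essentially the same route as the paper: bound the nonnegative coefficient $A_j$ by $z^{-j}g(z)$ at the approximate saddle point $z=(j/(q+j))^{1/p}$ to get \eqref{eq:general}, then simplify to \eqref{eq:large}. The only difference is in the last step, where the paper bounds $(c/(1+c))^{1/p}\le e^{-\ln 2/p}\le 1-\tfrac{3\ln 2}{4p}$ via the inequality $e^{-t}\le 1-\tfrac{3t}{4}$ for $0\le t\le 0.6$, whereas you use Bernoulli's inequality to show $(1-1/(2p))^p\ge 1/2$; both arguments implicitly require $c\le 1$ (equivalently $j\le q$), which you state explicitly but the paper leaves tacit.
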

\proof From \eqref{eq:Agen}, we have
\begin{align*}
\sum_{ j\ge 0}A_j(q,q_1/q)z^j&=(1-z)^{-q_1}(1-z^p)^{-(q-q_1)/p}.
\end{align*}
By \eqref{eq:cycle}, we have
\[
A_j(q,q_1/q)\ge 0,~~~(j\ge 0).
\]
It follows that, for $0<y<1$,
\begin{align}
A_j(q,q_1/q)&\le y^{-j}(1-y)^{-q_1}(1-y^p)^{-(q-q_1)/p}\nonumber\\
\ln A_j(q,q_1/q)&\le -j\ln y-q_1\ln(1-y)-\frac{q-q_1}{p}\ln\left(1-y^p\right). \label{eq:Abound1}
\end{align}
To minimize the above upper bound, we choose $y$  near the solution to the following saddle point equation
\begin{align}\label{eq:saddle}
-\frac{j}{y}+q_1\frac{1}{1-y}+(q-q_1)\frac{y^{p-1}}{1-y^p}=0,~~
i.e.,~~~q_1\frac{y}{1-y}+(q-q_1)\frac{y^p}{1-y^p}=j.
\end{align}

 When $q_1$ is much smaller than $q$, we may  choose
\begin{align*}
y=\left(\frac{j}{q+j}\right)^{1/p}
 \end{align*}
 to be an approximate solution to \eqref{eq:saddle}.
Substituting this into \eqref{eq:Abound1}, we obtain  \eqref{eq:general}.

 Noting
\begin{align}\label{eq:exp1}
e^{-t}&\le 1-\frac{3t}{4}, &(0\le t\le 0.6)
\end{align}
we obtain
\begin{align}
\left(\frac{j}{q+j}\right)^{1/p}&=\exp\left(\frac{-1}{p}\ln\frac{q+j}{j}\right)\le \exp\left(\frac{-\ln 2}{p}\right)\le 1-\frac{3\ln 2}{4p},\nonumber\\
-\ln\left(1- \left(\frac{j}{q+j}\right)^{1/p}\right)&\le -\ln\frac{3\ln 2}{4p}
\le \ln (2p). \label{eq:bound}
\end{align}
Substituting \eqref{eq:bound} into \eqref{eq:general}, we obtain \eqref{eq:large}. \qed

\medskip

\noindent {\bf Remark} Bound \eqref{eq:large} is sufficient for our purpose. One can derive sharper bounds depending on the ranges of $p,q,\gamma,c$. For example, when $p=2$,
\begin{align*}
y(\gamma,c)&=\frac{\sqrt{\gamma^2+4c^2+4c}-\gamma}{2(1+c)}
\end{align*}
is the exact solution to the saddle point equation, which can be used to obtain a sharper estimate for $A_j(q,q_1/q)$ than \eqref{eq:general}.
 When $\gamma$ is close to 1, a different choice of $y$ gives better estimate for $A_j(q,q_1/q)$ than
\eqref{eq:general}.

\medskip

The following inequality \cite[(5)]{Bo85} will be used to estimate binomial numbers.
\begin{align}\label{eq:bin}
{n\choose m}&\ge  e^{-1/6}
  \left(\frac{n}{2\pi m(n-m)}\right)^{1/2}\left(\frac{n}{m}\right)^{m}\left(\frac{n}{n-m}\right)^{n-m}.
  &(0< m<n)
\end{align}

\noindent {\bf Remark}: We may use \eqref{eq:bin} to show
 \[
 \ln {q/p+q_1+j-1 \choose j} -\ln A_j(q,q_1/q)\to \infty
 \]
 when $j=cq$ and $q_1=o(q)$.

If $p=q$ then it follows from \eqref{eq:Aj} that
\[
\ln {q/p+q_1+j-1 \choose j} -\ln A_j(q,q_1/q)=\ln \frac{{q_1+j\choose j}}{{q_1+j-1\choose j}}
=\ln \frac{q_1+j}{q_1}\to \infty.
\]

When $q\ge p^2$,  we apply \eqref{eq:bin} to obtain
\begin{align}\label{eq:Wan1}
\ln {q/p+q_1+j-1 \choose j}&\ge cq\ln \frac{(c+1/p)q+q_1-1}{cq}+(q/p+q_1-1)\ln \frac{(c+1/p)q+q_1-1}{q/p+q_1-1}\nonumber\\
&~~~-\frac{1}{2}\ln \frac{2\pi cq}{1+cp}-\frac{1}{6}.
\end{align}
The RHS of \eqref{eq:Wan1} is asymptotically larger than
\[
cq\ln \frac{1+cp}{cp}+\frac{q}{p}\ln(1+cp).
\]
It follows from \eqref{eq:large} that, if $q\ge p^2$, then
\[
\ln {q/p+q_1+j-1 \choose j}-\ln A_j(q,q_1/q)>\frac{q}{p}\ln\frac{1+cp}{1+c}\to \infty.
\]

\begin{thm}\label{thm:main} Let $q$ be a power of a prime $p$,  $\gamma=(\ell-1)q^{-1/2}\le 1$, and $f\in \Mcal_{k+\ell}$.
\begin{itemize}
\item[(a)] Let $c=(k+\ell)/q$. We have $d(f,\RS_{q,k})=q-k-\ell$ provided that
\begin{align}
& ~~~~\frac{(p-1)c}{p}\ln \frac{1}{c}+(1-c)\ln \frac{1}{1-c}-\frac{1+c}{p}\ln(1+c)
-\left(\frac{1}{6q}+\frac{\ln q}{q}+\frac{1}{2q}\ln(2q\pi c(1-c))\right)
\nonumber\\
&\ge \gamma\left(\frac{\ln q}{\sqrt{q}}+\ln(2p)-\frac{1}{p}\ln(1+c)\right).\label{eq:main1}
\end{align}
\item[(b)]
Let $c=(k+1)/q$. We have $d(f,\RS_{q,k})\le q-k-1$ provided that
\begin{align}
& ~~~~\frac{(p-1)c}{p}\ln \frac{1}{c}+(1-c)\ln \frac{1}{1-c}-\frac{1+c}{p}\ln(1+c)
-\left(\frac{1}{6q}+\frac{\ln q}{q}+\frac{1}{2q}\ln(2q\pi c(1-c))\right)
\nonumber\\
&\ge \gamma\left(\frac{\ln q}{2\sqrt{q}}+\ln(2p)-\frac{1}{p}\ln(1+c)\right).\label{eq:main2}
\end{align}
\end{itemize}
 \end{thm}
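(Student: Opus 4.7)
The plan is to reduce each part to a single inequality of the form ``main term exceeds error bound'' and then to verify, by combining the binomial lower bound \eqref{eq:bin} with the upper bound from Lemma~\ref{lem:Abound}, that the stated hypotheses \eqref{eq:main1} and \eqref{eq:main2} are precisely these inequalities after taking logarithms and rearranging.

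For part (a), I would first note that $d(f,\RS_{q,k})=q-k-\ell$ is equivalent to $N_{k+\ell}(\langle f\rangle,k+\ell)>0$: the map $h\mapsto f-h$ is a bijection between $\Mcal_{k+\ell}(\langle f\rangle)$ and (the polynomial representatives of) the codewords of $\RS_{q,k}$, under which a polynomial in $\Mcal_{k+\ell}(\langle f\rangle)$ with $r$ distinct roots in $\fq$ corresponds to a codeword at distance $q-r$, and $k+\ell$ is the largest achievable $r$. Invoking Theorem~\ref{thm:thm6} with $r=k+\ell$, the main term collapses to $\binom{q}{k+\ell}q^{-\ell}$ (only the $j=0$ summand survives) and the error sum collapses to $A_{k+\ell}(q,q_1/q)$, because $\binom{j}{k+\ell}=0$ for $j<k+\ell$ and $\binom{\ell-1}{0}q^{0}=1$ at $j=k+\ell$. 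Thus it suffices to prove
\[
\binom{q}{k+\ell}q^{-\ell}\ge A_{k+\ell}(q,q_1/q),
\]
since the factor $(1-q^{-\ell})$ in Theorem~\ref{thm:thm6} then converts this into strict positivity of $N_{k+\ell}(\langle f\rangle,k+\ell)$. I would bound the left side from below using \eqref{eq:bin} with $n=q$, $m=cq=k+\ell$ and the right side from above using \eqref{eq:large}. Substituting $\ell\ln q/q=\ln q/q+\gamma\ln q/\sqrt{q}$, which follows from $\ell=1+\gamma\sqrt{q}$, and isolating the $\gamma$-terms on the right then yields \eqref{eq:main1}.

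For part (b), the factorial-moment identity \eqref{eq:Moments} with $m=k+1$ gives
\[
\sum_{r\ge k+1}r^{\underline{k+1}}\,N_{k+\ell}(\langle f\rangle,r)=(k+1)!\,W_{k+1}(\langle f\rangle),
\]
so the existence of some $r\ge k+1$ with $N_{k+\ell}(\langle f\rangle,r)>0$, equivalently $d(f,\RS_{q,k})\le q-k-1$, is equivalent to $W_{k+1}(\langle f\rangle)>0$. I would then apply Proposition~\ref{prop:Wj} with $j=k+1$; since $\binom{\ell-1}{\ell-1}=1$, it reduces to proving
\[
\binom{q}{k+1}q^{-1}\ge q^{(\ell-1)/2}A_{k+1}(q,q_1/q).
\]
Repeating the logarithmic manipulation from part (a) with $c=(k+1)/q$, and observing that the extra factor $q^{(\ell-1)/2}$ contributes $(\ell-1)\ln q/(2q)=\gamma\ln q/(2\sqrt{q})$ to the right-hand side (half of what appears in (a)), produces \eqref{eq:main2}.

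The main obstacle is purely algebraic bookkeeping: one must carefully track the $\gamma$-dependence that migrates to the right-hand side both from the main-term side (via $\ell\ln q/q$) and from the $A_j$ side (via the $\frac{1-\gamma}{p}\ln(1+c)$ exponent in \eqref{eq:large}), and confirm that the $-\frac{1}{p}\ln(1+c)$ summand that appears inside the $\gamma$-bracket on the right of \eqref{eq:main1}, \eqref{eq:main2} is exactly what remains after this migration. No new analytic estimates are required beyond \eqref{eq:bin} and Lemma~\ref{lem:Abound}; the only combinatorial input is the collapse of the error sum in part (a) and the identification $W_{k+1}>0\Leftrightarrow d(f,\RS_{q,k})\le q-k-1$ in part (b).
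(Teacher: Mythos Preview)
Your proposal is correct and follows essentially the same route as the paper: both arguments reduce parts (a) and (b) to the single inequalities $\binom{q}{k+\ell}\ge q^{\ell}A_{k+\ell}(q,q_1/q)$ and $\binom{q}{k+1}\ge q^{(\ell+1)/2}A_{k+1}(q,q_1/q)$ respectively, and then combine the binomial lower bound \eqref{eq:bin} with the upper bound \eqref{eq:large} from Lemma~\ref{lem:Abound}, using $\ell=1+\gamma\sqrt{q}$ to isolate the $\gamma$-terms. The only cosmetic difference is that the paper uniformly phrases both reductions via the probabilistic equivalence $\prob(Y\ge r)>0\Leftrightarrow \ex(Y^{\underline{r}})>0$ together with \eqref{eq:Moments} and \eqref{eq:Wbound}, whereas you invoke Theorem~\ref{thm:thm6} directly for (a) and Proposition~\ref{prop:Wj} for (b); these are the same computation.
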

 \begin{proof}
For simplicity, define
\begin{align*}
f(p,c)&=\frac{(p-1)c}{p}\ln \frac{1}{c}+(1-c)\ln \frac{1}{1-c}-\frac{1+c}{p}\ln (1+c),\\
g(q,c)&=\frac{1}{6q}+\frac{\ln q}{q}+\frac{1}{2q}\ln \left(2q\pi c(1-c)\right),\\
h_1(p,q,c)&=\frac{\ln q}{\sqrt{q}}+ \ln (2p)-\frac{1}{p}\ln (1+c),\\
h_2(p,q,c)&=\frac{\ln q}{2\sqrt{q}}+ \ln (2p)-\frac{1}{p}\ln (1+c).
\end{align*}

For $k+1\le r\le k+\ell$, we first note
\begin{align*}
r!\prob(Y\ge r)\le \ex\left(Y^{\underline {r}}\right)\le (k+\ell)^{\underline {r}}\,\prob(Y\ge r),
\end{align*}
and hence $\prob(Y\ge r)>0$ if and only if $\ex\left(Y^{\underline {r}}\right)>0$.
It follows from \eqref{eq:Moments} and \eqref{eq:Wbound} that $\prob(Y\ge r)>0$ if
\begin{align}
{q\choose r}\ge {\ell-1\choose \ell+k-r}q^{(r+\ell-k)/2}A_r(q,q_1/q). \label{eq:Pbound}
\end{align}

For $c=r/q$, we use \eqref{eq:bin}  to obtain
\begin{align}
\ln {q\choose r}&\ge q\left(c\ln \frac{1}{c}+(1-c)\ln \frac{1}{1-c}\right)-\frac{1}{2}\ln \left(2q\pi c(1-c)\right)-\frac{1}{6}.\label{eq:bin1}
\end{align}

\noindent  For part~(a), we substitute $r=k+\ell$ into \eqref{eq:Pbound} to obtain
\begin{align*}
{q\choose k+\ell}&\ge q^{\ell}A_{k+\ell}.
\end{align*}
Taking the logarithm on both sides,  using \eqref{eq:large} and \eqref{eq:bin1}, and dividing by $q$, we obtain the following sufficient condition for $\prob(Y\ge k+\ell)>0$:
\begin{align*}
&~~c\ln \frac{1}{c}+(1-c)\ln \frac{1}{1-c}-\frac{1}{2q}\ln \left(2q\pi c(1-c)\right)-\frac{1}{6q}\\
&\ge \frac{\ell \ln q}{q}+
\frac{c}{p}\ln \frac{1+c}{c}+\frac{1-\gamma}{p}\ln(1+c)+\gamma \ln (2p).
\end{align*}
Now \eqref{eq:main1} follows by noting $\prob(Y=k+\ell)=\prob(Y\ge k+\ell)$ and $\ell=1+\gamma\sqrt{q}$.

\noindent  For part~(b), we substitute $r=k+1$ into \eqref{eq:Pbound} to obtain
\begin{align*}
{q\choose k+1}&\ge q^{(\ell+1)/2}A_{k+1}.
\end{align*}
Now \eqref{eq:main2} follows from the same argument as in part~(a) with $h_2$ replacing $h_1$.\qed

The following figure shows the intervals where $f(p,c)\ge 0$ for $p\in \{2,3,5,7,17\}$.
\begin{center}
\begin{figure}[h]
\includegraphics[scale=0.5]{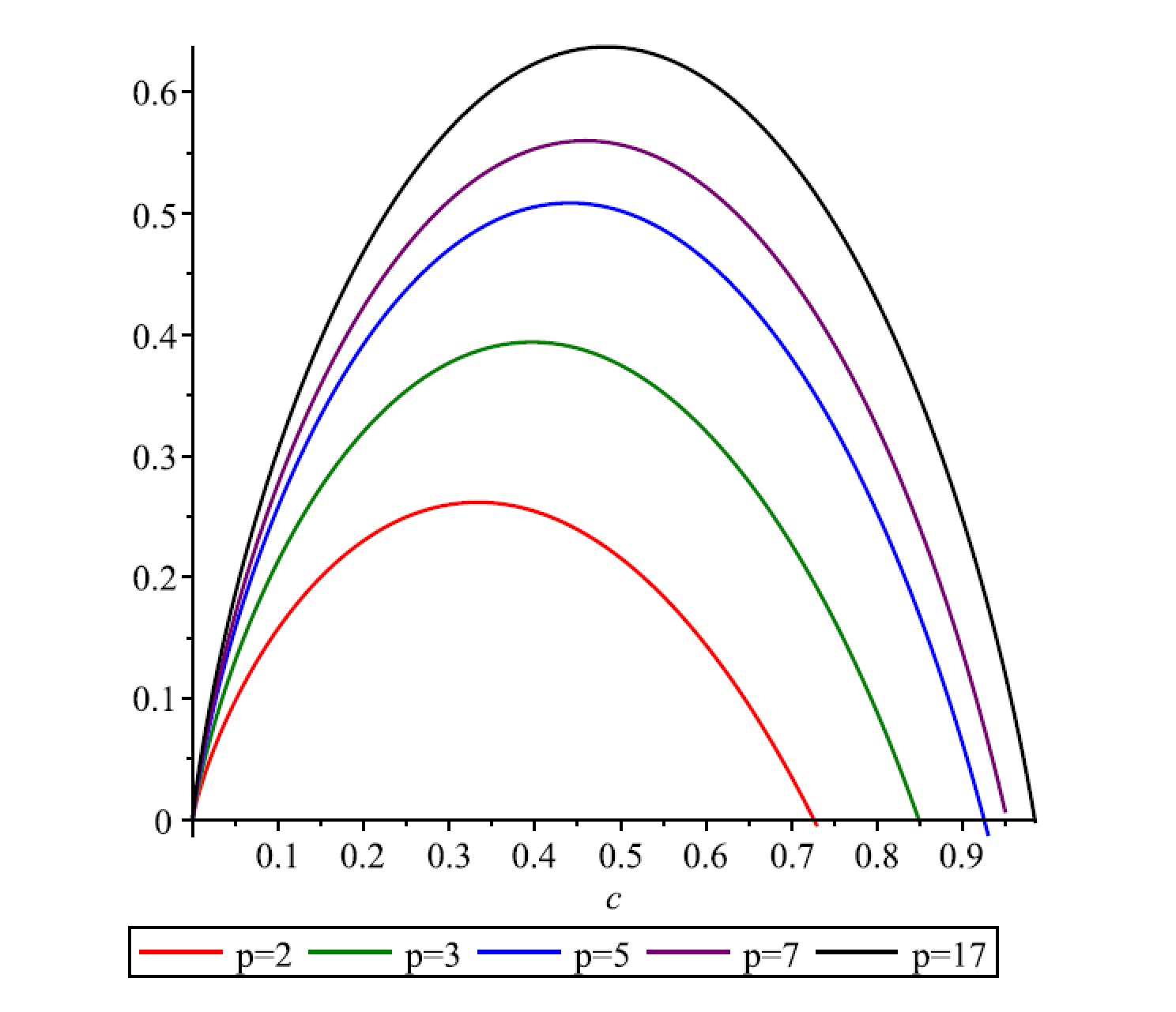}
\end{figure}
\end{center}
The next two corollaries provide some precise ranges of $c$ for $p\le 7$.  Since the deep hole conjecture has been verified for $k\ge \lfloor (q-1)/2\rfloor$, we focus on $k\le (q-2)/2$ in Corollary~\ref{cor:cor2}.
\begin{cor}\label{cor:cor2} Let $q$ be a power of a prime $p$, $c=(k+1)/q$ and $\gamma=(\ell-1)/\sqrt{q}$.
Let $f(p,c),g(q,c),h_2(p,q,c)$ be defined in the proof of Theorem~\ref{thm:main}.
Then  $\displaystyle d(f,\RS_{q,k})\le q-k-1$ for each of the following cases.
\begin{align*}
(a)& \quad p=2, q\ge 2^5, 3\le k\le \frac{q-2}{2},  &\gamma\le \frac{f(p,c)-g(q,c)}{h_2(p,q,c)},\\
(b)& \quad p=3, q\ge 3^3, 1\le k\le \frac{q-2}{2},  &\gamma\le \frac{f(p,c)-g(q,c)}{h_2(p,q,c)},\\
(c)& \quad p=5, q\ge 5^2, 1\le k\le \frac{q-2}{2},  &\gamma\le \frac{f(p,c)-g(q,c)}{h_2(p,q,c)},\\
(d)& \quad p\ge 7, q\ge p, 1\le k\le \frac{q-2}{2}, &\gamma\le \frac{f(p,c)-g(q,c)}{h_2(p,q,c)}.
\end{align*}
\end{cor}

\proof  We first note
\begin{align*}
\frac{\partial^2}{(\partial c)^2}(f(p,c)-g(q,c))&=\frac{1+c}{c(c-1)}+\frac{1}{cp}+\frac{1-2c+2c^2}{2qc^2(1-c)^2}\\
&\le \frac{1+c}{c(c-1)}+\frac{1}{2c}+\frac{1-2c+2c^2}{2c(1-c)^2}&(\hbox{using }p\ge 2,q\ge 1/c)\\
&=\frac{2c^2+3c-3}{2(1+c)(1-c)^2}<0. &(\hbox{if }0\le c\le 0.68).
\end{align*}
Hence for given $p,q$, each $f(p,c)-g(q,c)$ is concave down on $[0,0.5]$. Thus we only need to verify
$f(p,c)-g(q,c) >0$ at the end points of the respective interval.

\noindent For part~(a) we have
\begin{align*}
f(2,0.5)-g(q,0.5)&>0.041,~ f(2,4/q)-g(q,4/q)>0.0187.  &(q\ge 32)
\end{align*}

\noindent For part~(b) we have
\begin{align*}
f(3,0.5)-g(q,0.5)&>0.1772,~f(3,2/q)-g(q,2/q)>0.0005.  &(q\ge 27)
\end{align*}

\noindent For part~(c) we have
\begin{align*}
f(5,0.5)-g(q,0.5)&>0.2933,~f(5,2/q)-g(q,2/q)>0.0373.  &(q\ge 25)
\end{align*}

\noindent For part~(d) we have $f(p,c)>f(7,c)$ if $p>7$, and
\begin{align*}
f(7,0.5)-g(q,0.5)&>0.0837,~f(7,2/q)-g(q,2/q)>0.0424.  &(q\ge 7)
\end{align*}
\qed

\begin{cor}\label{cor:cor3} Let $q$ be a power of a prime $p$, $c=(k+\ell)/q$, $\gamma$,  $f(p,c)$, $g(q,c)$ and $h_1(p,q,c)$ be as defined in the proof of Theorem~\ref{thm:main}.
Then  $\displaystyle d(f,\RS_{q,k})=q-k-\ell$ for each of the following cases.
\begin{align*}
(a)& ~~ p=2, q\ge 2^8, \frac{3}{q}\le c\le 0.7, &\gamma\le \frac{f(p,c)-g(q,1/2)}{h_1(p,q,c)},\\
(b)& ~~ p=3, q\ge 3^4, \frac{3}{q}\le c\le 0.8, &\gamma\le \frac{f(p,c)-g(q,1/2)}{h_1(p,q,c)},\\
(c)& ~~ p=5, q\ge 5^3, \frac{2}{q}\le c\le 0.9, &\gamma\le \frac{f(p,c)-g(q,1/2)}{h_1(p,q,c)},\\
(d)& ~~ p\ge 7, q\ge 7^4, \frac{2}{q}\le c\le 0.95,&\gamma\le \frac{f(p,c)-g(q,1/2)}{h_1(p,q,c)}.
\end{align*}
\end{cor}

\proof  The proof is similar to that of Corollary~\ref{cor:cor2}. To maximize the range of $c$, we use the bound $g(q,c)\le g(q,1/2)$, and note that $f(p,c)$ is concave up on $(0,1)$ for each $p\ge 2$.
Hence for given $p,q$ we only need to verify
$f(p,c)-g(q,1/2) >0$ at the end points of the respective interval.

\noindent For part~(a) we have
\begin{align*}
f(2,0.7)-g(q,0.5)&>0.0009,~ f(2,3/q)-g(q,1/2)>0.0069.  &(q\ge 2^8)
\end{align*}

\noindent For part~(b) we have
\begin{align*}
f(3,0.8)-g(q,1/2)&>0.002,~f(3,3/q)-g(q,1/2)>0.0189.  &(q\ge 3^4)
\end{align*}

\noindent For part~(c) we have
\begin{align*}
f(5,0.5)-g(q,1/2)&>0.0011,~f(5,2/q)-g(q,1/2)>0.0044.  &(q\ge 5^3)
\end{align*}

\noindent For part~(d) we have $f(p,c)>f(7,c)$ if $p>7$, and
\begin{align*}
f(7,0.95)-g(q,1/2)&>0.0004,~f(7,2/q)-g(q,1/2)>0.0007.  &(q\ge 7^4)
\end{align*}

\medskip

\noindent {\em Proof of Theorem~\ref{thm2} and Theorem~\ref{thm3} }: Theorem~2 follows immediately from Theorem~7 by noting
\[
g(q,c)\le g(q,1/2)\le \frac{2}{3q}+\frac{3\ln q}{2q}.
\]
For Theorem~3(a), we may set $p_0=\frac{1+c}{1-c}$. Then, for $p\ge p_0$, we have
\begin{align*}
f(p,c)&\ge\frac{c}{2}\ln \frac{1}{c}+\left(1-c-\frac{1+c}{p}\right)\ln (1+c)\ge\frac{c}{2}\ln \frac{1}{c}.
\end{align*}
Choosing a sufficiently large positive $q_0$ such that 
\[
\frac{2}{3q_0}+\frac{3\ln q_0}{2q_0}<\frac{c}{2}\ln \frac{1}{c},
\] and
setting  $\gamma_0=(f(p_0,c)-g(q_0,1/2))/h_1(p_0,q_0,0)$, we obtain Theorem~3(a). \\
Theorem~\ref{thm3}(b) follows from Corollary~2 by setting $q_0=7^4$ and $\gamma_0=(f(p_0,c)-g(q_0,1/2))/h_1(p_0,q_0,0)$. \qed\\
\end{proof}

\section{Conclusion}

In this paper we used the generating function approach to derive simple expressions of the factorial moments of the distance distribution over $\RS_{n,k}$.
This recovers the Li-Wan's counting formula. We also  apply the saddle point estimate
to derive a sharper bound on the error term. This improves
the previous estimate and leads to several new applications on deep hole problem and
ordinary words problem.

\medskip

\noindent {\bf Acknowledgement} We would like to thank Prof. D. Wan for his helpful comments which  improve the presentation of the paper.

\end{document}